\newtheorem{theorem}{Theorem}[section]
\newtheorem{proposition}[theorem]{Proposition}
\newcommand{\R}{{\mathbb{R}}}
\def\BibTeX{{\rm B\kern-.05em{\sc i\kern-.025em b}\kern-.08em
    T\kern-.1667em\lower.7ex\hbox{E}\kern-.125emX}}
\begin{document}

\title{Mixed-feedback oscillations in the foraging dynamics of arboreal turtle ants
\thanks{}
}

\author{Alia~Valentine, Deborah~M.~Gordon, Anastasia~Bizyaeva
\thanks{A. Valentine is with the Center for Applied Mathematics at Cornell University, Ithaca, NY, 14850, \tt{av589@cornell.edu}}
\thanks{D. M. Gordon is with the Department of Biology at Stanford University, Stanford, CA, 94305, \tt{dmgordon@stanford.edu} }%
\thanks{A. Bizyaeva is with the Sibley School of Mechanical and Aerospace Engineering at Cornell University, Ithaca, NY, 14850, \tt{anastasiab@cornell.edu}}
}%

\maketitle

\begin{abstract} We propose and analyze a model for the dynamics of the flow into and out of a nest for the arboreal turtle ant \textit{Cephalotes goniodontus} during foraging to investigate a possible mechanism for the emergence of oscillations. In our model, there is mixed dynamic feedback between the flow of ants between different behavioral compartments and the amount of pheromone along trails. On one hand, the ants deposit pheromone along the trail, which provides a positive feedback by increasing rates of return to the nest. 
On the other hand, pheromone evaporation is a source of negative feedback, as it depletes the pheromone and inhibits the return rate. We prove that the model is globally asymptotically stable in the absence of pheromone feedback. Then we show that pheromone feedback can lead to a loss of stability of the equilibrium and onset of sustained oscillations in the flow in and out of the nest via a Hopf bifurcation. This analysis sheds light on a potential key mechanism that enables arboreal turtle ants to effectively optimize their trail networks to minimize traveled path lengths and eliminate graph cycles.
\end{abstract}


\section{Introduction}


Oscillatory dynamics are a characteristic feature of computations in complex systems across nature and technology.
For example, oscillations in the flow of nutrients and signaling molecules have been linked to the abilities of the cellular slime mold \textit{Physarum polycephalum} to perform a variety of cognitive tasks, including shortest path computation \cite{boussard2021adaptive,dussutour2024flow,gyllingberg2024minimal}.
Analogously, oscillations in nutrient and signaling molecule flow provide an active mechanism for regulating transport direction and coordinating resource allocation in fungal networks \cite{schmieder2019bidirectional}.
In distributed networks of neurons in the brain, oscillations are also fundamental for a wide range of functions including sensory and cognitive processing, memory, and the integration of neural activity \cite{bacsar2000brain}. 
In engineered systems, oscillatory or ``spiking'' control signals are being embraced as a design principle for generating flexible behaviors across neuromorphic hardware and robotics \cite{sepulchre2019control,sepulchre2018excitable,sepulchre2022spiking,cathcart2024spiking,JuarezAlvarez2025}.

In this paper we study the onset of sustained oscillations in the flow into and out of the nest during foraging for the arboreal turtle ant \textit{Cephalotes goniodontus} (henceforth referred to as turtle ant), in order to gain mechanistic insight into factors that enable turtle ant trail network optimization. 
To do this, we propose and analyze a compartmental model of turtle ant foraging dynamics. 
Turtle ants are known to maintain complex trail networks along tree branches and leaves, connecting their nests and food sources \cite{gordon2012dynamics,gordon2017local}. 
In recent work, it was shown that turtle ants can change these trail networks over time to minimize total traveled path length and to eliminate redundant graph features such as cycles \cite{chandrasekhar2021better,garg2023distributed}.
However, doing this requires the rate of flow of ants along trails to increase over time, which cannot be sustained indefinitely due to finite ant volume. 
Inspired by these findings, we will explore conditions under which sustained oscillations in the flow rate of turtle ants emerge from interactions between different groups of ants at the nest, and from feedback loops between the ants' return rate to the nest and the pheromone along their trails. 
Under oscillatory conditions, regular periods of flow rate growth that enable trail network optimization are followed by periods of flow rate decay. 
We conjecture that such oscillations may be a key feature of the distributed computation performed by turtle ants during foraging.


Mixed feedback is a key feature of oscillatory behavior \cite{sepulchre2019control,sepulchre2018excitable,JuarezAlvarez2025}. 
In a mixed-feedback system, a positive feedback loop drives the state of a system 
away from equilibrium, while a negative feedback loop prevents these excitations from becoming unbounded. 
In this paper we introduce a dynamic model of turtle ant foraging in which there is mixed feedback between the ants' flow between different behavioral compartments and the amount of pheromone along their trails. \textcolor{black}{We will show that this mixed feedback provides a pathway to sustained oscillations in the flow rate of ants entering and exiting a nest.}

Previous work \cite{chandrasekhar2021better,garg2023distributed} explored how the rate of pheromone deposition and the rate of pheromone decay allow the ants to maintain and repair a trail, showing that an increase in flow was necessary for smoothing out graph cycles. In this work, we build on these insights and explore a mechanistic explanation for \textit{how} periodic increases in the flow might emerge. To do this we introduce the effect of unloading at the nest, which was not considered in previous work. Turtle ants forage for nectar and must unload it by trophallaxis (i.e. mouth-to-mouth transfer) to another ant at the nest. Our modeling incorporates the joint effects of unloading nectar and of pheromone feedback. We will show that together, these two effects can give rise to oscillations.

In our model, we differentiate between \textit{trail ants} whose job it is to look for nectar and bring it back to the nest, and \textit{nest ants} whose job it is to receive nectar from the returning foragers and bring it inside.
We make a simplifying assumption that the foraging ants do not permanently return to the nest, and that nest ants to do not go out to forage. 
To model interactions between trail ants and those that remain at the nest, we utilize a compartmental approach \cite{sumpter2003modelling}, also common in the modeling of epidemics and other spreading processes \cite{brauer2008compartmental,gracy2025modeling}. 
In our model, a trail ant supplying food must come into contact with a nest ant receiving food before it can return to the trails to forage.
The competition between pheromone deposition by ants along their trail network and depletion of pheromone via evaporation is then a source of mixed feedback.
Deposition 
is a source of positive feedback as it indirectly increases the rate of return of ants to the nest. 
On the other hand, 
evaporation is a source of negative feedback, since it depletes the pheromone and slows the rate of return.
We will show that together, the effects of trophallaxis and of mixed pheromone feedback can destabilize the flow of ants 
and cause it to oscillate. 









The following are the contributions of this paper. First, we introduce a new compartmental model that describes turtle ant foraging. Our model captures interactions between different groups of ants as they exchange food at the nest as well as a mixed feedback loop between the foraging ants' rate of return to the nest and the dynamics of the pheromone along the trail. We prove that the model is well-posed.
Second, we prove that in the absence of pheromone feedback the foraging model is globally asymptotically stable over its domain and find an explicit solution for the ant volumes in different behavioral compartments at the nest and on the trail at equilibrium.  
Third, we study the foraging model with pheromone feedback and derive implicit conditions for existence and uniqueness of an equilibrium solution. Numerically, we illustrate that this equilibrium can lose stability in a Hopf bifurcation at which is an onset of sustained oscillations in the model, which we show in simulation. Together, these results highlight the potential role of trophallaxis interactions, pheromone deposition, and mixed feedback in the optimization of turtle ant trail networks. 

This paper is structured as follows. In Section \ref{sec:prelims} we state preliminaries. In Section \ref{sec:model} we motivate and introduce our foraging model. In Section \ref{sec:analysis} we analyze the model with and without pheromone feedback, and present numerical studies. In Section \ref{sec:discussion} we conclude and discuss future work.







\section{Mathematical Preliminaries \label{sec:prelims}}

Let $n\in \mathbb{N}$. The closure of $S\subset \mathbb{R}^n$ is $\bar S=\cap_{V\supset S, V\text{closed}}V$. The interior of $S$ is $S^\circ =\cup _{U\subset S,U\text{open}}U$. The boundary of $S$ is $\partial S = \bar S \setminus S^\circ$.
The spectral abscissa of $A\in M_{n\times n}(\mathbb{R})$ is $\sigma(A) = \max_{\lambda \in \text{spec}(A)}\{\text{Re}(\lambda)\}$.
A system of differential equations with continuous first derivatives on $\mathbb{R}^n$ given by $\dot x_i = f_i(x_1,...,x_n) = f_i(x), i\in \{1,..,n\}$, or equivalently $\dot x = f(x)$, with $f:\Gamma \mapsto \R^n, \Gamma \subseteq \R^n$ is called \textit{competitive} if $\frac{\partial f_i}{\partial x_j}\le  0 \forall i\neq j \forall x \in \Gamma$. \textcolor{black}{Next, we restate Proposition II.1 
from \cite[Theorem 2.3]{doi:10.1137/0513013}}.


\begin{proposition}[Flow of planar competitive systems] \label{prop:comp_sys}
    Suppose $x \in \R^2$ and a system given by $\dot x = f(x)$ is competitive. Suppose $\Gamma = \R^2 \text{ or } \R_+^2$ and let $y:[0,\infty)\mapsto \Gamma$ be the solution through $y(0)$. Then either $\|y(t)\| \to \infty$ or $y(t)$ converges to some point of $\bar \Gamma$ as $t\to \infty$.
\end{proposition}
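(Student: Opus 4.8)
The plan is to reduce the two-dimensional dichotomy to a one-dimensional statement about each coordinate, namely that every coordinate $y_i(t)$ is eventually monotone in $t$. Once this is in hand the conclusion is immediate: a monotone real-valued function has a limit in the extended reals, so each $y_i(t)$ tends to some $\ell_i \in [-\infty,\infty]$ (or $[0,\infty]$ when $\Gamma = \R_+^2$). If both limits are finite, then $y(t)$ converges to the point $(\ell_1,\ell_2)\in\bar\Gamma$; if either limit is infinite, then $\|y(t)\| \ge |y_i(t)| \to \infty$. Thus the entire content of the proposition would follow from the eventual monotonicity of the two coordinates, and this is what I would aim to establish.

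To prove eventual monotonicity I would study the velocity $w(t) := \dot y(t) = f(y(t))$. Since $f$ has continuous first derivatives, differentiating along the orbit gives the variational identity $\dot w = Df(y(t))\,w$, where the Jacobian $A(t) := Df(y(t))$ has nonpositive off-diagonal entries $A_{12} = \partial f_1/\partial x_2 \le 0$ and $A_{21} = \partial f_2/\partial x_1 \le 0$ by competitiveness along the trajectory, which lies in $\Gamma$. If $y$ is constant the claim is trivial, so I would assume $y$ is nonconstant; then $f(y(t_0))=0$ would force $y\equiv y(t_0)$ by uniqueness, so in fact $w(t)\neq 0$ for all $t$ and $w$ never meets the origin of velocity space.

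The heart of the argument is then a sign-crossing analysis of $w$ in the four open quadrants of $\R^2$. Whenever $w$ meets the axis $\{w_1=0\}$ one has $\dot w_1 = A_{12}w_2$, whose sign is opposite to that of $w_2$, and on $\{w_2=0\}$ one has $\dot w_2 = A_{21}w_1$, opposite in sign to $w_1$. Tracking which transitions are admissible shows that the only crossings consistent with these sign constraints are (using the usual numbering $Q_1=\{w_1>0,w_2>0\}$, etc.) $Q_1\to Q_2$, $Q_1\to Q_4$, $Q_3\to Q_2$, and $Q_3\to Q_4$; in particular $Q_2$ and $Q_4$ admit no outgoing crossing and are forward-invariant. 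Hence $w(t)$ can cross the coordinate axes at most once, so each component $w_i=\dot y_i$ changes sign at most once and each $y_i$ is eventually monotone, which closes the reduction.

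The hard part will be the careful treatment of the degenerate cases the clean quadrant picture hides: instants where $w$ touches an axis tangentially rather than crossing it (possible when $A_{12}$ or $A_{21}$ vanishes momentarily), or intervals on which a component of $w$ is identically zero, freezing one coordinate of $y$ and making the flow effectively one-dimensional. Handling these rigorously requires either a strict-sign or perturbation argument, or a direct finiteness count for the genuine sign changes of each $w_i$; the boundary behavior when $\Gamma=\R_+^2$ must also be checked, though since the trajectory is assumed to remain in $\Gamma$ and competitiveness holds throughout $\Gamma$, the Jacobian sign conditions persist up to the limit and the limit point lies in the closed set $\bar\Gamma$. An alternative for the bounded case would invoke the Poincar\'e--Bendixson theorem together with the nonexistence of periodic orbits for planar competitive systems, but the velocity-sign argument is preferable here because it delivers the unbounded alternative $\|y(t)\|\to\infty$ directly rather than only ruling it out by hypothesis.
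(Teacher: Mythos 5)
The paper does not actually prove this proposition: it is imported verbatim from \cite[Theorem 2.3]{doi:10.1137/0513013}, so there is no in-paper argument to measure yours against. Your velocity-quadrant argument is, in substance, the classical proof of that cited result: differentiate along the orbit to get the variational equation $\dot w = Df(y(t))\,w$, use the nonpositive off-diagonal entries to make the second and fourth quadrants of velocity space trapping, conclude that each $\dot y_i$ eventually has constant sign, and read the dichotomy off eventual monotonicity. The reduction and the final step (finite limits give convergence to a point of $\bar\Gamma$, an infinite limit forces $\|y(t)\|\to\infty$) are airtight, as is the observation that $w$ never vanishes on a nonconstant orbit. The only genuine gap is the one you flag yourself: tangential axis touchings and intervals where a component of $w$ vanishes defeat a naive ``each axis is crossed at most once'' count. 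The standard way to close it is to replace the open quadrants by the closed cone $K=\{w_1\le 0,\ w_2\ge 0\}\setminus\{0\}$ and its negative $-K$: the sign condition $A_{12},A_{21}\le 0$ is exactly the Kamke--M\"uller condition making $K$ and $-K$ positively invariant for the nonautonomous linear system $\dot w=A(t)w$ (a comparison or perturbation argument handles the non-strict inequalities), so either $w$ enters $K\cup(-K)$ at some time and its component signs are frozen thereafter, or it never does and then, since it also never hits the origin, connectedness keeps it in a single open quadrant $\{w_1w_2>0\}$ for all time; either way both components of $w$ have eventually constant sign. Equivalently, reflecting $x_2\mapsto -x_2$ turns the competitive system into a cooperative one and lets you quote invariance of the positive cone directly. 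Your closing remark is also right: Poincar\'e--Bendixson would only treat bounded orbits, whereas the velocity argument delivers the unbounded alternative for free.
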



\section{Model Setup \label{sec:model}}

\subsection{Compartmental Foraging Model}

A colony of arboreal turtle ants occupies multiple nests along the trail network. 
In this paper we study the simplest scenario in which there is only a single nest. 
Consider two interacting groups of ants with $N > 0$ ants that remain at the nest and $M > 0$ ants that explore the trails. 
We assume that there are no transitions between these groups, i.e. that trail ants do not permanently return to the nest and that nest ants do not go out onto the trails. \textcolor{black}{This is a reasonable assumption for time scales within a day 
\cite{gordon_division_2016}.}
Under this assumption, the volumes $N$ and $M$ are constant. 

Following a compartmental modeling paradigm \cite{sumpter2003modelling}, assign ants within both the nest and trail groups to one of two mutually exclusive behavioral states.  
The trail group is made up of $\bar F$ \textit{forager} ants out on the trails actively exploring for food and $\bar S$ \textit{supplier} ants that have acquired food and are waiting to hand it off to a nest ant. 
By conservation of volume, $\bar F + \bar S = M$. 
Analogously, we split the nest ant group into $\bar R$ \textit{receiver} ants picking up food from suppliers outside of the nest and $\bar I$ \textit{interior} ants that are inside of the nest, with $\bar R +\bar I = N$. Finally, we define normalized variables $R = \bar R/N, I = \bar I/N, F = \bar F/M, S = \bar S/M$ such that $R+I=1$ and $S+F=1$. Observe that $R,S,I,F$ are each constrained to the unit interval $[0,1]$. \textcolor{black}{In taking this modeling approach, we have abstracted away the geometry of the trail network.}

Next, we consider transitions between the behavioral compartments within the nest and trail groups. 
We assume that foragers return to the nest to supply food at constant transition rate $\gamma > 0$, and interior ants become receivers at a transition rate $\alpha > 0$. 
To transition in the other direction, a supplier from the trail group must come in physical contact with a receiver from the nest group in order to hand off food.
Therefore the transition rates from supplier back to forager, as well as those from receiver back to interior ant, will be state-dependent.
By the law of mass action, the transition rate from $\bar R$ to $\bar I$ is $\beta \bar S/N$ with scaling parameter $\beta > 0$.
Analogously, the transition rate from $\bar S$ to $\bar F$ is $\beta \bar R/M$.
Assuming that the volume variables are continuous, observing that volume conservation implies $\dot{\bar R} + \dot{\bar I} = 0$, $\dot{\bar S} + \dot{\bar F} = 0$, and applying normalization, we arrive at a two-dimensional dynamical system that describes the dynamics of the nest and trail variables over time as
\begin{subequations} 
\label{eq:model_2d}
    \begin{align}
        \dot R &= \alpha(1 - R) - \beta \frac{M}{N} R S, \label{eq:Rdot}\\
        \dot S &= \gamma (1 - S) - \beta \frac{N}{M} R S. \label{eq:Sdot}
    \end{align}
\end{subequations}
The variables $I(t)$ and $F(t)$ can be recovered from volume conservation as $I(t) = 1 - R(t)$ and $F(t) = 1 - S(t)$. Key features of \eqref{eq:model_2d} are summarized in Fig. \ref{fig:nest-compartments}.


\begin{figure}
    \centering
    \includegraphics[width=.78
    \linewidth]{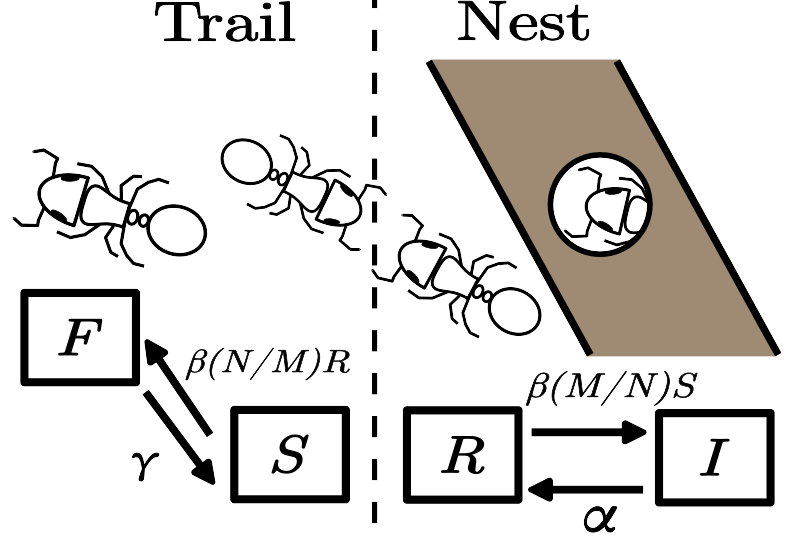}
    \caption{Graphical summary of compartmental foraging model  \eqref{eq:model_2d}.}
    \label{fig:nest-compartments}
\end{figure}

\subsection{Pheromone Feedback}

The foraging model \eqref{eq:model_2d} accounts only for effects of direct interaction between ants.
However, the ants' foraging dynamics are also coupled with and regulated by the ants' pheromone deposition along trails. To model this coupling, we introduce a new dynamic variable $p \ge 0$, \textcolor{black}{representing an amount of pheromone that is deposited by the flow of ants to and from the nest. }
As ants move along the trail network, they deposit pheromone which increases the amount $p$. The pheromone also evaporates at fixed rate $\mu > 0$, which decreases the \textcolor{black}{amount} $p$. The rate of pheromone deposition is proportional both to the rate of return of foragers to supply food at the nest, captured by $\gamma \bar F = \gamma M (1 - S)$, and to the rate of return of suppliers to the foraging state, captured by $\beta \frac{\bar R}{M} \bar{S} = \beta N R S$.
An increase in either of these rates will increase the pheromone amount $p$, since movement in either direction (towards or away from the nest) results in new pheromone deposition.
We let $\nu > 0$ be a proportionality constant that tunes the deposition rate. Putting these pieces together, the temporal dynamics of the amount of pheromone $p$ is governed by
\begin{equation}
    \label{eq:p}
        \dot p = - \mu p + \nu \big( \gamma  M (1 - S) + \beta N R S \big).
\end{equation}


Next we consider the rate $\gamma$ at which forager ants return to the nest to become suppliers. This rate was assumed to be constant in the baseline model \eqref{eq:model_2d}. In reality, $\gamma$ is closely related to the amount of pheromone $p$. Foragers rely on pheromone to navigate the colony's trail networks. At high amounts, pheromone is easy for foragers to detect, so navigating back to the nest to supply food is easier. At low amounts, navigation becomes more difficult and the return rate of foragers will decrease as they may get lost or take a longer path towards food and back to the nest. To model this pheromone feedback, we take the return rate parameter $\gamma$ to be dynamic, with dynamics

\begin{equation}
    \label{eq:gamma}
        \tau_{\gamma} \dot \gamma = - \gamma +  f(p) + \gamma_0,
\end{equation}
where $f:\mathbb{R}^{\geq 0} \to \mathbb{R}^{\geq 0}$ a monotonically increasing function of $p$, $\tau_\gamma>0$ a time scale, and $\gamma_0\ge0$ a base rate of return to the nest. 
In numerical simulations we choose $f$ to be a saturating Hill function $f(p) = \frac{k}{1 + (p_0/p)^n}$, with positive parameters $k, p_0,n$. \textcolor{black}{This choice is motivated by the assumption that beyond a threshold amount of pheromone, more pheromone does not influence the decision about which path to take from a junction in the vegetation.}


Together, our foraging model with pheromone feedback is a four-dimensional dynamical system given by \eqref{eq:model_2d}, \eqref{eq:p}, and \eqref{eq:gamma}. The parameters of this model summarized in Table \ref{table:1}. 
Note that the coupling between \eqref{eq:p} and \eqref{eq:gamma} is a source of mixed dynamic feedback in the model, as the evaporation of pheromone \textit{inhibits} the rate of return while the deposition of pheromone \textit{reinforces} the rate of return. Our analysis in Section \ref{sec:analysis} will show that this mixed feedback is key to the emergence of limit cycles in the compartment variables, and therefore also in the flow rates of the groups. This hints at a key mechanism behind the ants' success at navigating trail networks, as a time-varying rate of flow along trails has previously been shown to play a key role in trail network optimization that enables ants to minimize traveled path lengths and eliminate cycles \cite{garg2023distributed}.

\begin{table}[h!]
\centering
\begin{tabular}{|c|p{6.5cm}|} 
 \hline
 Parameter & Summary \\ [0.5ex] 
 \hline\hline
 $N$ & Nest ant volume.\\
  $M$ & Trail ant volume.\\
 $\alpha$ &  Transition rate from interior to receiver at the nest.  \\
 $\beta$ & Interaction rate scale between suppliers and receivers.   \\
 $\mu$ & Pheromone evaporation rate. \\
 $\nu$ & Pheromone deposition rate.\\
 $\gamma_0$ & Base rate of $\gamma$, transition rate from forager to supplier; equal to $\gamma$ in the 2d model \eqref{eq:model_2d}.\\
 $\tau_\gamma$ & Characteristic timescale of the response of the return rate $\gamma$ to changes in $p$ \\
 $k$ & Scaling constant of $f(p)$ \\
 $p_0$ & Saturation midpoint parameter of $f(p)$ \\
 $n$ & Slope parameter of $f(p)$\\
 \hline
\end{tabular}
\caption{Summary of model parameters in \eqref{eq:model_2d},\eqref{eq:p},\eqref{eq:gamma}.}
\label{table:1}
\end{table}



\subsection{Well-Posedness}

In the following Theorem we establish the well-posedness of the two-dimensional foraging model \eqref{eq:model_2d} and of its four-dimensional extension with pheromone feedback \eqref{eq:p},\eqref{eq:gamma} by proving that the variables $R,S,p,\gamma$ do not take on values outside of the interpretable range under the flow of the dynamical system.

\begin{theorem}\label{thm:well-posedness} The following statements hold.

    1) The compact set $\Omega_1=[0,1]\times [0,1]$ is positively invariant under the flow of \eqref{eq:model_2d} with static parameter $\gamma > 0$;
    
    2) The set $\Omega_2=[0,1] \times [0,1] \times \mathbb{R}^{\geq 0} \times \mathbb{R}^{\geq0} $ is positively invariant under the flow of \eqref{eq:model_2d},\eqref{eq:p},\eqref{eq:gamma}. 
\end{theorem}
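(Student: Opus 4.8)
The plan is to establish positive invariance through the boundary (sub-tangentiality) characterization: for a closed set, a trajectory starting inside cannot exit provided that at every boundary point the vector field lies in the tangent cone of the set, i.e.\ it never points strictly outward across an active face. Since $\Omega_1$ and $\Omega_2$ are each closed and convex (finite intersections of half-spaces), and the right-hand sides of \eqref{eq:model_2d}, \eqref{eq:p}, \eqref{eq:gamma} are continuous (the Hill nonlinearity $f$, extended by $f(0)=0$, is continuous and nonnegative on $\mathbb{R}^{\geq0}$), this reduces to checking a sign condition on the outward normal velocity along each bounding face. I would invoke Nagumo's theorem for the general principle and then verify the finitely many face conditions by direct substitution.

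For statement 1, I would check the four edges of $\Omega_1=[0,1]^2$. On $\{R=0\}$, \eqref{eq:Rdot} gives $\dot R=\alpha>0$, so the flow enters; on $\{R=1\}$, $\dot R=-\beta\frac{M}{N}S\le0$. Symmetrically, on $\{S=0\}$, \eqref{eq:Sdot} gives $\dot S=\gamma>0$, and on $\{S=1\}$, $\dot S=-\beta\frac{N}{M}R\le0$. Each inequality uses only $\alpha,\beta,\gamma>0$ and $R,S\in[0,1]$, so the normal velocity is inward or tangent on every face and $\Omega_1$ is positively invariant.

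For statement 2 the region $\Omega_2=[0,1]^2\times\mathbb{R}^{\geq0}\times\mathbb{R}^{\geq0}$ adds the two lower faces $\{p=0\}$ and $\{\gamma=0\}$ and is unbounded in $p,\gamma$, so no upper faces in those coordinates require checking. The $R$- and $S$-faces are handled exactly as above, now using only $\gamma\ge0$ (which is itself maintained). On $\{p=0\}$, \eqref{eq:p} gives $\dot p=\nu\big(\gamma M(1-S)+\beta N R S\big)\ge0$ because $\gamma\ge0$ and $R,S\in[0,1]$. On $\{\gamma=0\}$, \eqref{eq:gamma} gives $\tau_\gamma\dot\gamma=f(p)+\gamma_0\ge0$ since $f\ge0$ and $\gamma_0\ge0$. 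Hence the sub-tangentiality condition holds on all of $\partial\Omega_2$.

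The main subtlety to treat carefully is the behavior at edges and corners where several constraints are active simultaneously (for example $(R,S)=(1,1)$, or the degenerate point $S=p=\gamma=0$ with $\gamma_0=0$, at which the normal velocities $\dot S,\dot p,\dot\gamma$ all vanish). A naive face-by-face argument with strict inequalities does not directly cover these tangency points. Nagumo's theorem resolves them: invariance follows as long as the vector field belongs to the tangent cone, which for these convex sets means each \emph{individually active} constraint satisfies its one-sided inequality---precisely what the face computations verify. Alternatively, I would argue coordinate by coordinate through a differential-inequality (comparison) bound---e.g.\ $R(t)\le1$ because $R=1$ forces $\dot R\le0$, and $p(t)\ge0$ because $p=0$ forces $\dot p\ge0$---which reaches the same conclusion without separately analyzing corners.
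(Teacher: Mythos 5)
Your proposal is correct and follows essentially the same route as the paper: both verify positive invariance via Nagumo's theorem by checking that the vector field points inward (or is tangent) on each face of $\partial\Omega_1$ and on the additional faces $\{p=0\}$, $\{\gamma=0\}$ of $\Omega_2$. Your explicit treatment of corners and tangency points, and the remark that the Hill function must be continuously extended at $p=0$, are slightly more careful than the paper's case table but do not constitute a different argument.
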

\begin{proof} 

    1) Let $\partial \Omega_1$ be the boundary of $\Omega_1$, which includes the sides and corners of the unit cube. \textcolor{black}{Table \ref{table:III1proof} breaks $\partial \Omega_1$ into cases, showing in each case that the flow along $\partial \Omega_1$ is into $\Omega_1$.
     With all the cases checked, Nagumo's theorem \cite[Theorem 4.7]{blanchini2008set} implies $\Omega_1$ is positively invariant.}

\begin{table}[h!]
\centering
\begin{tabular}{|c|p{3.0cm}|}
 \hline
 Case & Implication \\ [0.5ex] 
 \hline\hline
 $S=0$, $R\in [0,1]$ & $\dot S = \gamma >0$ \\ \hline
 $S=1$, $R\in (0,1]$ & $\dot S = -\beta\frac{N}{M}R <0$ \\ \hline
 $R=0$, $S\in (0,1)$ & $\dot R = \alpha >0$ \\ \hline
$R = 1, S\in (0,1)$ & $\dot R = -\beta \frac{M}{N}S<0$ \\ \hline
$S=1, R =0$ & $\dot R = \alpha>0$ \\ \hline
\end{tabular}
\caption{\textcolor{black}{Proof the flow of \eqref{eq:model_2d} is into $\Omega_1$ along its boundary. Note that the first four cases together also check three corners of $\partial \Omega_1$.}}
\label{table:III1proof}
\end{table}

    2) We check that the flow along boundaries of $\Omega_2$ does not exit the set. \textcolor{black}{If $p=0$, then $\dot p  = \nu \big( \gamma  M (1 - S) + \beta N R S \big) \ge 0$, and if $\gamma = 0$, then $\dot \gamma = \frac{1}{\tau_\gamma}(f(p)+\gamma_0) >0$. For any other condition, the previous part implies the desired result.}
\end{proof}

\section{Analysis\label{sec:analysis}}

\begin{figure}
    \centering
    \includegraphics[width=0.9
    \linewidth]{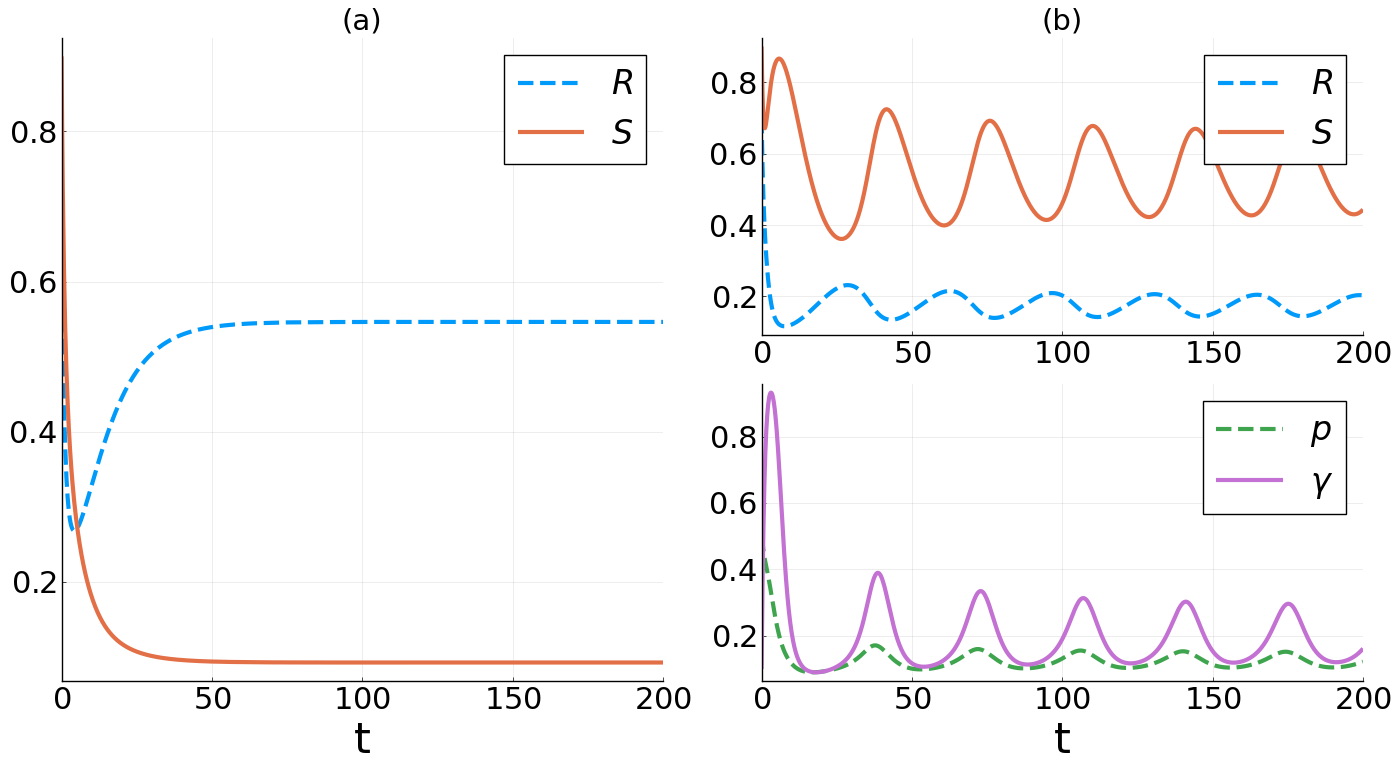}
    \caption{(a) Numerically computed trajectory of \eqref{eq:model_2d} with parameters $N=M=1, \alpha = 0.1, \beta=0.9,\gamma=0.05$ settling into the globally attracting equilibrium. (b) Numerically computed trajectory of \eqref{eq:model_2d},\eqref{eq:p},\eqref{eq:gamma} With the same parameters as in (a) and additionally $\mu = 0.6,\nu = 0.45,\tau_\gamma = 1,\gamma_0 = 0.05, k = 1, p_0 = 0.2,n=4$. The initial condition of (a),(b) is $R(0)=0.7$, $S(0)=0.9$; 
    in (b), additionally $p(0) = 0.5$ and $\gamma(0)= 0.1$.}
    \label{fig:trajectories}
\end{figure}

In this Section we study the foraging model \eqref{eq:model_2d},\eqref{eq:p},\eqref{eq:gamma}, with the goal of classifying conditions under which oscillations in the compartment variables, and therefore in their rates of change, are possible.

\subsection{Model without pheromone feedback\label{sec:2dmodel}}

First we consider the compartmental foraging model \eqref{eq:model_2d} with a static interaction rate parameter $\gamma$. In the following Theorem we prove that under such static parameter assumptions, the model is globally asymptotically stable, i.e. oscillations are impossible.

\begin{theorem}[Global Asymptotic Stability] \label{thm:2D-EQ}
    Consider \eqref{eq:model_2d} over $\Omega_1 = [0,1] \times [0,1]$. 
    
    1) The model has a unique fixed point $(R,S) = (R^*,S^*)$ in $\Omega_1$, given by $S^* = \frac{1}{2A}(A-B-1 + \sqrt{(A-B-1)^2 + 4AB}), R^* = B(1-S^*)/S^*$ where $A = \gamma M^2 / \alpha N^2$ and  $B = \gamma M/ \beta N$ 

    2)  $(R^*,S^*)$ is globally attracting on $\Omega_1$.
\end{theorem}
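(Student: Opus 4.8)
The plan is to treat part 1) by direct algebra and part 2) by invoking the theory of planar competitive systems through Proposition \ref{prop:comp_sys}. For the fixed point, I would set $\dot R = \dot S = 0$ in \eqref{eq:model_2d}. Solving the $\dot S = 0$ relation for $R$ immediately gives $R = B(1-S)/S$ with $B = \gamma M/\beta N$, which is the claimed formula for $R^*$. To isolate $S$, I would divide the $\dot R = 0$ relation by the $\dot S = 0$ relation; the common state-dependent factor $\beta R S$ cancels, leaving $1 - R = A(1-S)$ with $A = \gamma M^2/\alpha N^2$. Substituting $R = B(1-S)/S$ and clearing denominators yields the quadratic $g(S) = A S^2 + (1 - A + B)S - B = 0$, whose larger root is exactly the stated $S^*$.

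For uniqueness in $\Omega_1$, I would observe that $g$ opens upward ($A>0$) and has product of roots $-B/A < 0$, so it possesses exactly one positive root; since $g(0) = -B < 0$ and $g(1) = 1 > 0$, that root lies in $(0,1)$ and coincides with the $+$-branch $S^*$. The associated $R^* = \alpha/(\alpha + \beta (M/N) S^*)$, the equivalent form of the $\dot R = 0$ relation, is then manifestly in $(0,1)$. I would also rule out equilibria on $\partial\Omega_1$ directly: on $S=0$ we have $\dot S = \gamma > 0$; on $R=0$ we have $\dot R = \alpha > 0$; and on $S=1$ or $R=1$ the remaining equation forces the other coordinate to a value already excluded. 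Hence $(R^*,S^*)$ is the unique fixed point in $\Omega_1$ and it lies in the interior.

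For part 2), I would exploit that \eqref{eq:model_2d} is planar competitive so that Proposition \ref{prop:comp_sys} applies. The off-diagonal Jacobian entries are $\partial \dot R/\partial S = -\beta (M/N) R \le 0$ and $\partial \dot S/\partial R = -\beta (N/M) S \le 0$ on $\R_+^2$, so the system is competitive with $\Gamma = \R_+^2$. By Theorem \ref{thm:well-posedness}.1 the compact set $\Omega_1 \subset \R_+^2$ is positively invariant, so any trajectory $y(t)$ with $y(0)\in\Omega_1$ stays in $\Omega_1$ and is bounded; in particular the alternative $\|y(t)\| \to \infty$ is excluded. Proposition \ref{prop:comp_sys} then forces $y(t)$ to converge to a point of $\overline{\R_+^2} = \R_+^2$. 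Since the limit of a convergent trajectory of an autonomous system is an equilibrium, and since part 1) shows the only equilibrium in $\Omega_1$ is $(R^*,S^*)$, every trajectory in $\Omega_1$ converges to $(R^*,S^*)$, which is the claimed global attraction.

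The main obstacle is not the convergence dichotomy, which Proposition \ref{prop:comp_sys} supplies, but assembling its hypotheses cleanly: verifying the competitive sign structure on all of $\R_+^2$ rather than merely near the equilibrium, and using positive invariance of $\Omega_1$ to eliminate the blow-up branch of the dichotomy. A secondary subtlety is ensuring the limit is the interior equilibrium $(R^*,S^*)$ and not some boundary point of $\R_+^2$; this is precisely why establishing in part 1) that the fixed point is interior and that $\partial\Omega_1$ carries no equilibria is essential to close the argument.
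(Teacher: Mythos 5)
Your proof is correct and follows essentially the same route as the paper: the same quadratic in $S^*$ for part 1), and for part 2) the planar competitive-systems dichotomy of Proposition \ref{prop:comp_sys} combined with positive invariance of $\Omega_1$ to rule out blow-up and land on the unique equilibrium. Your uniqueness argument via the sign of the product of roots together with $g(0)<0<g(1)$, and your explicit exclusion of boundary equilibria, are slightly cleaner packagings of the paper's direct inequality manipulations, but the substance is identical.
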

\begin{proof}
1) At equilibrium, $\dot R = \dot S = 0$, which implies $\beta R^* S^* = \alpha\frac{N}{M}(1-R^*) = \gamma \frac{M}{N}(1-S^*)$. Solving this expression for $R^*$ yields $R^* = 1-A(1-S^*)$, which can then be plugged into \eqref{eq:Sdot} to arrive at a quadratic equation of the form 
$0 = A(S^*)^2 +(1+B-A)S^* - B$ which has two solutions $S^*_\pm = \frac{1}{2A}\left(A - B - 1 \pm \sqrt{(A-B-1)^2 +4AB}\right)$. 
These solutions generate $R^*_\pm = B\frac{1-S^*_\pm}{S^*_\pm}$. 

For compactness let $C_1 =A-B-1$ and $C_2 = \sqrt{(A-B-1)^2 +4AB}$. Next, we will show that $R^*_-, S^*_- \notin \Omega_1$. Observe that $C_2 > C_1$ for any $C_1 \in \mathbb{R}$ since $4AB > 0$. Then $C_1-C_2 < 0$ which implies $S_-^* < 0$. 

Finally, we will show that $(R_+^*,S_+^*) \in \Omega_1$. An analogous argument to above shows that $S_+^* > 0$.  

Suppose $S_+^*<1$. Then 
$(1+B - A)^2 + 4AB < (A+B+1)^2$. Expanding and canceling like terms results in the condition $0<2A$ which is always satisfied, so $S_+^*<1$. $R_+^* = B\frac{1-S_+^*}{S_+^*}$ implies $R_+^*>0$ since $S_+^* \in (0,1)$. Finally, suppose $R_+^*<1$. Writing this in terms of $S_+^*$ gives $B\frac{1-S^*}{S^*}<1$ which rearranges to $S_+^*> \frac{B}{1+B}$. Substituting our expression for $S_+^*$, multiplying by $2A$, and rearranging yields $\sqrt{(A+B-A)^2 +4AB}>\frac{2AB}{1+B} + 1+B-A$, 
which simplifies to $A>0$ which is always true, so $R_+^*<1$. The equilibrium $(R^*,S^*) = (R_+^*,S_+^*)$ is therefore unique in $\Omega_1$.


2) The Jacobian of \textcolor{black}{\eqref{eq:model_2d}}
 is: 
 \[J(R,S) = 
 \begin{bmatrix}
 - \alpha - \beta \frac{M}{N}S & - \beta \frac{M}{N}R \\
 - \beta \frac{N}{M}S & - \gamma - \beta \frac{N}{M}R
 \end{bmatrix}
 \]
 Since $S,R \in [0,1]$, all entries of $J$ are non-positive, and the system given by \eqref{eq:model_2d} is \emph{competitive}. 
 By Proposition \ref{prop:comp_sys}, 
 any trajectory $(R(t),S(t))$ starting from initial condition $(R(0),S(0)) \in \Omega_1$ will either approach an equilibrium or $ \|(R(t),S(t)) \| \to \infty$ as $t \to \infty$. By Theorem \ref{thm:well-posedness}, the bounded region $\Omega_1$ is invariant under the flow of \eqref{eq:model_2d}, 
 \textcolor{black}{so} all trajectories must settle to an equilibrium. Since the equilibrium $(R^*,S^*)$ defined in part 1) is unique in $\Omega_1$, we conclude it must be globally asymptotically stable.   
\end{proof}
A numerical simulation of the foraging model \eqref{eq:model_2d} settling to its unique globally asymptotically stable equilibrium described in Theorem \ref{thm:2D-EQ} is shown in Fig. \ref{fig:trajectories}(a).

\subsection{Full model }
Theorem \ref{thm:2D-EQ} implies that physical interactions between ants during trophallaxis are not sufficient to cause oscillations in the model. An additional mechanism is necessary in order to destabilize the equilibrium and cause such oscillations. In this Section we explore the pheromone feedback \eqref{eq:gamma}, \eqref{eq:p} as a possible mechanism that could drive this loss of stability. First, we establish conditions for existence and uniqueness of an equilibrium in the full model. Next, we numerically study the stability of this equilibrium, focusing on the effects of variations in transition rate parameters $\alpha$ and $\beta$. We will show that for a range of these parameters, the equilibrium becomes unstable. At the onset of this instability, the model undergoes a 
Hopf bifurcation, and oscillations emerge.

First, in the following Theorem, we derive conditions for existence and uniqueness of an equilibrium for the full foraging model with dynamic pheromone feedback.

\begin{theorem}[Equilibrium Existence and Uniqueness] \label{thm:4D-EQ}
    Equilibria \textcolor{black}{$(R,S,p,\gamma) = (R^*, S^*, p^*, \gamma^*)$} of the model \eqref{eq:model_2d}, \eqref{eq:p}, \eqref{eq:gamma} are implicitly defined by $S^* = \frac{\alpha}{\beta}\frac{N}{M}\frac{(1-R^*)}{R^*}$, $p^* = 2\frac{\nu}{\mu}\alpha  \frac{N^2}{M}(1-R^*)$, $\gamma^*=f\left(2\frac{\nu}{\mu}\alpha  \frac{N^2}{M}(1-R^*)\right) +\gamma_0= \alpha \beta \frac{N^2}{M}\frac{R^*(1-R^*)}{\beta M R^* - \alpha N(1-R^*)}$, where $R^*$ is a root of this implicit relationship in the unit interval $[0,1]$. At least one equilibrium exists in $\Omega_2$ for all parameter values. Furthermore, any $R^* \in \left(\frac{\alpha N}{\alpha N + \beta M}, 1\right)$.
    Let $R_1$ be the first value of $x \in \left(\frac{\alpha N}{\alpha N + \beta M}, 1\right)$ satisfying the equilibrium conditions. Then if $
       \int_{R_1}^1  f'\left(2\frac{\nu}{\mu}\alpha  \frac{N^2}{M}(1-x)\right) - \frac{1}{2} \frac{\nu}{\mu} \beta \frac{\alpha N (1-x)^2 + \beta M x^2}{\left( \alpha N (1-x) + \beta M x\right)^2} dx > 0$, 
    then the implicit relationship above defines a unique equilibrium $(S^*, R^*, p^*, \gamma^*)$ in $\Omega_2$ where $R^* = R_1$.
\end{theorem}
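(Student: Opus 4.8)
The plan is to reduce the four stationarity conditions to a single scalar equation in $R^*$, read off the feasible interval from the constraint $S^*\in(0,1)$, establish existence by the intermediate value theorem, and establish uniqueness by showing the reduced residual crosses zero only once, with the stated integral inequality serving as the certificate.

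First I would set $\dot R=\dot S=\dot p=\dot\gamma=0$ in turn. From $\dot R=0$ I solve $S^*=\frac{\alpha}{\beta}\frac{N}{M}\frac{1-R^*}{R^*}$. Substituting into the $\dot R=0$ and $\dot S=0$ relations shows the two pheromone–deposition terms coincide, $\gamma^* M(1-S^*)=\beta N R^* S^*=\alpha\frac{N^2}{M}(1-R^*)$, so that $\dot p=0$ gives $p^*=\frac{2\nu}{\mu}\alpha\frac{N^2}{M}(1-R^*)$ and $\dot S=0$ gives the closed form $\gamma^*=\alpha\beta\frac{N^2}{M}\frac{R^*(1-R^*)}{\beta M R^*-\alpha N(1-R^*)}=:\gamma_{\mathrm{dyn}}(R^*)$. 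Equating this with the feedback relation $\gamma^*=f(p^*)+\gamma_0=:\gamma_{\mathrm{fb}}(R^*)$ collapses the equilibrium problem to the single equation $\Phi(R^*):=\gamma_{\mathrm{dyn}}(R^*)-\gamma_{\mathrm{fb}}(R^*)=0$, with $S^*,p^*,\gamma^*$ then recovered from $R^*$.

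Next I would pin down the feasible interval and prove existence. Requiring $S^*\in(0,1)$ forces $R^*\in(R_0,1)$ with $R_0=\frac{\alpha N}{\alpha N+\beta M}$; on this interval the denominator $\beta MR^*-\alpha N(1-R^*)$ is positive, so $\gamma_{\mathrm{dyn}}>0$ and $p^*\ge 0$ automatically, placing any such equilibrium in $\Omega_2$. I would then examine the endpoints of $\Phi$: as $R^*\to R_0^+$ the denominator of $\gamma_{\mathrm{dyn}}$ tends to $0^+$ with a positive numerator, so $\Phi\to+\infty$; as $R^*\to 1^-$ we get $\gamma_{\mathrm{dyn}}\to 0$ and $\gamma_{\mathrm{fb}}\to f(0)+\gamma_0$, so $\Phi\to-(f(0)+\gamma_0)\le 0$. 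Continuity and the intermediate value theorem then give a zero in $(R_0,1)$ for all parameters (in the degenerate case $f(0)+\gamma_0=0$ the boundary point $R^*=1$ is itself an equilibrium). I define $R_1$ as the smallest such zero, so $\Phi>0$ on $(R_0,R_1)$.

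Finally, for uniqueness I would show that under the integral hypothesis $\Phi$ does not return to zero on $(R_1,1)$; since $R_1$ is the first crossing, it suffices to prove $\Phi<0$ there. The difficulty, which I expect to be the main obstacle, is that \emph{both} $\gamma_{\mathrm{dyn}}$ and $\gamma_{\mathrm{fb}}$ are strictly decreasing on $(R_0,1)$, so two decreasing curves can in principle meet several times; ruling this out requires comparing their slopes. Differentiating $\Phi$ produces a pointwise competition between the steepening of the feedback, governed by $f'(p^*(x))$, and the decay of $\gamma_{\mathrm{dyn}}$, governed by the rational term $\tfrac12\tfrac{\nu}{\mu}\beta\frac{\alpha N(1-x)^2+\beta M x^2}{(\alpha N(1-x)+\beta M x)^2}$; integrating this comparison over $[R_1,1]$ is exactly the stated inequality, and its positivity is designed to certify that once $\Phi$ has crossed zero at $R_1$ it stays on one side. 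The subtle point is that the condition is \emph{global} rather than pointwise: with a Hill nonlinearity $f'$ is bell-shaped while the rational term varies nontrivially, so neither factor is monotone, and I would need an auxiliary shape estimate — for instance a bound on the number of critical points of $\Phi$ on $(R_1,1)$ — to guarantee that the sign of the net integral genuinely precludes a second crossing rather than merely balancing a pair of them. Establishing that structural bound is the crux of the argument, after which uniqueness of $R^*=R_1$, and hence of $(R^*,S^*,p^*,\gamma^*)$ in $\Omega_2$, follows.
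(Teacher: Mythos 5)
Your proposal follows the paper's proof essentially step for step: the same reduction of the four stationarity conditions to the scalar equation $g(R^*)=h(R^*)$ (your $\gamma_{\mathrm{dyn}}=\gamma_{\mathrm{fb}}$), the same identification of the feasible interval $\left(\frac{\alpha N}{\alpha N+\beta M},1\right)$ from the sign of the denominator, and the same intermediate-value argument for existence using the blow-up of $g$ at the left endpoint together with $g(1)=0$. Your hesitation at the uniqueness step is warranted but does not put you behind the paper: the paper's treatment of that step is the single sentence that the integral condition ``imposes this non-intersection,'' with no argument that a net-positive integral of the slope difference over $(R_1,1)$ excludes a \emph{pair} of additional crossings rather than merely balancing them --- which is exactly the gap you identify.
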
 
\begin{proof}
    At an equilibrium, the right hand side of \eqref{eq:Rdot}, \eqref{eq:Sdot}, \eqref{eq:p}, \eqref{eq:gamma} is equal to zero. We will use these relationships to find expressions for $S^*, \gamma^*, p^*$ as a function of $R^*$, and an implicit function whose zeros define the values of $R^*$ at steady state.  From \eqref{eq:Rdot} and \eqref{eq:Sdot} we see that $\gamma^*(1-S^*)M = \alpha (1-R^*)\frac{N^2}{M}$. Plugging this in to \eqref{eq:p} to eliminate $S^*$ we obtain $p^* = 2\frac{\nu}{\mu}\alpha  \frac{N^2}{M}(1-R^*)$. Then \eqref{eq:Rdot} implies $S^* = \frac{\alpha}{\beta}\frac{N}{M}\frac{(1-R^*)}{R^*}$. To obtain an implicit equation for $R^*$, we can notice that \eqref{eq:Sdot} gives $\gamma^* = \beta\frac{N}{M}\frac{R^*S^*}{1-S^*} = \frac{\frac{N^2}{M^2}\alpha R^* \left(\frac{1-R^*}{R^*}\right)}{1 - \frac{\alpha N (1-R^*)}{\beta M R^*}} = \frac{N^2}{M}\alpha \beta \frac{R^*(1-R^*)}{\beta M R^* - \alpha N(1-R^*)} := g(R^*)$, while $\eqref{eq:gamma}$ gives $\gamma^* = f(p^*) + \gamma_0 = f(2\frac{\nu}{\mu}\alpha  \frac{N^2}{M}(1-R^*)) +\gamma_0 := h(R^*)$. Setting these two expression for $\gamma^*$  equal to each other gives our implicit equation for $R^*$.

    To prove existence and uniqueness of these solutions, we study the number of intersections between the functions $h(x)$ and $g(x)$ over the unit interval $x \in [0,1]$, since the zeros of $h(R^*) -  g(R^*) = 0$ generate the equilibria of the model. First, observe that $h(0) = \gamma_0$, $h(1) = \frac{k}{1 + p_0^n} + \gamma_0$, $g(0) = 0$, and $g(1) = 0$. The function $h(x)$ is continuous and monotonically decreasing on $[0,1]$, so $\gamma_0 \leq h(x) \leq \frac{k}{1 + p_0^n} + \gamma_0$. The function $g(x)$ has a discontinuity at $x = \frac{\alpha N}{\alpha N + \beta M}$ where its denominator is zero. Everywhere else in the set interior $(0,1)$, $g(x)$ is continuous and differentiable, with derivative 
    \begin{equation*}
        g'(x) = - \alpha \beta  \frac{N^2}{M} \frac{\beta M x^2 + \alpha N (1-x)^2}{ \left( \alpha N (1-x) + \beta M x \right)^2} < 0.
    \end{equation*}
    Since $g(0) = 0$, $\lim_{x \to \left(\frac{\alpha N}{\alpha N + \beta M}\right)^-} g(x)= - \infty$, and $g'(x) < 0$ for all $x \in \left(0,\frac{\alpha N}{\alpha N + \beta M} \right)$, we conclude that over this interval,  $g(x) <  0$ while $h(x) > 0$ and intersections are impossible, which rules out equilibria in this region.

    Next, we consider the interval $x \in \left(\frac{\alpha N}{\alpha N + \beta M},1\right)$. Observe that $\lim_{x \to \left(\frac{\alpha N}{\alpha N + \beta M}\right)^+} g(x)= \infty$. Since $g(1) = 0$, we see that $h(1) - g(1) = \frac{k}{1+p_0^n} + \gamma_0 > 0$. Since $h(x)$ is bounded on the unit interval, $\lim_{x \to \left(\frac{\alpha N}{\alpha N + \beta M}\right)^+} h(x) - g(x) = - \infty$. Then by the intermediate value theorem, there must exist at least one value $x = R^*$ at which $h(R^*) = g(R^*)$. This establishes existence of an equilibrium in this interval. This equilibrium is unique if the difference $h(x) - g(x)$ does not cross zero a second time. The integral condition in the Theorem statement imposes this non-intersection.
\end{proof}

Theorem \ref{thm:4D-EQ} establishes that the model \eqref{eq:model_2d},\eqref{eq:p},\eqref{eq:gamma} always has at least one equilibrium. Furthermore, under mild assumptions about the shape of the function $f(p)$ that defines the relationship between the amount of pheromone and the ants' return rate to the nest, this equilibrium is unique. Next, we study the stability of this equilibrium.  The Jacobian \textcolor{black}{$J(R^*,S^*,p^*,\gamma^*)$} of the model evaluated at its equilibrium, 
is given by the matrix
\begin{equation}\begin{bsmallmatrix} -\alpha - \beta \frac{M}{N} S^* & - \beta \frac{M}{N} R^* & 0 & 0 \\     - \beta \frac{N}{M} S^* & - \gamma^* - \beta\frac{N}{M} R^* & 0 & 1-S^* \\   \nu \beta N S^* & \nu (- \gamma^*  M + \beta N  R^*) & - \mu & \nu M (1-S^*) \\    0 & 0 & \frac{f'(p^*)}{\tau_{\gamma}} & - \frac{1}{\tau_{\gamma}}\end{bsmallmatrix}\label{eq:jac_4d}\end{equation}
 uniquely defined by a given choice of model parameters via the implicit relationship derived in Theorem \ref{thm:4D-EQ}, provided the stated uniqueness condition is satisfied. When all eigenvalues of \textcolor{black}{\eqref{eq:jac_4d}} have negative real part, the equilibrium is locally exponentially stable. 

\begin{figure}[t]
    \centering
    \includegraphics[width=0.90\linewidth]{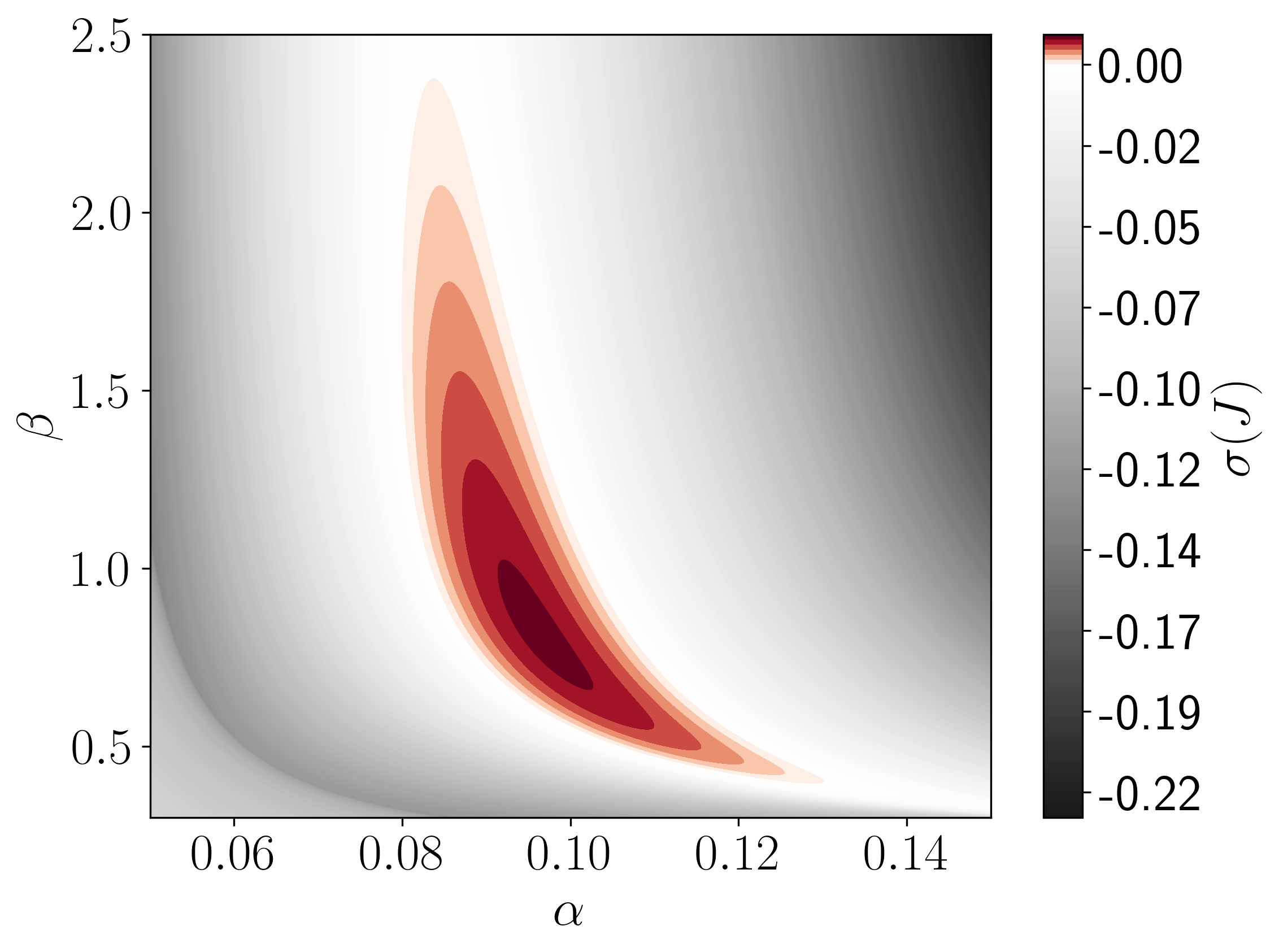}
    \caption{Contour plot of the spectral abscissa of the Jacobian 
    $J$ at equilibrium 
    for a range of of $\alpha$ and $\beta$. Computed for 1000 different values of $\alpha$ and $\beta$ over a uniform grid, with $\alpha \in [0.05,0.15] $ and $\beta \in [0.3,2.5]$, for a total of $10^6$ grid points.  The rest of the model parameters are as in Fig. \ref{fig:trajectories}. Oscillations occur in the red region where $\sigma(J)\ge0$.}
    \label{fig:contour}
\end{figure}

A common parametrized path to oscillation in a dynamical system is via a Hopf bifurcation \cite[Theorem 3.4.2]{guckenheimer2013nonlinear}. In a 
Hopf bifurcation, as a parameter is varied, an equilibrium loses stability and two complex conjugate eigenvalues of its Jacobian simultaneously cross the imaginary axis. At the onset of this instability, oscillations emerge. To find such regions of instability we study numerically the spectral abscissa of \textcolor{black}{\eqref{eq:jac_4d}} for the choice of model parameters $N = M = 1$, $\gamma_0 = 0.05$, $k = 1$, $p_0 = 0.2$, $n = 4$, $\mu = 0.6$, $\nu = 0.45$. The parameters $\alpha$ and $\beta$ are left as free parameters. This choice of parameters satisfied the conditions for equilibrium uniqueness of Theorem \ref{thm:4D-EQ}.

We numerically solve the implicit equations derived in Theorem \ref{thm:4D-EQ} to find the model equilibrium over a grid of $\alpha$, $\beta$, and then evaluate the \textcolor{black}{eigenvalues} of the Jacobian \textcolor{black}{$J$} at each parametrized equilibrium. The result of this numerical study is shown in Fig. \ref{fig:contour}. The parameter values in the gray region of the plot correspond to regions of local stability for the equilibrium. Inside of the red region, the equilibrium is unstable and the Jacobian has a leading pair of complex conjugate eigenvalues. The boundary of the red region corresponds to Hopf bifurcation points, crossing which by varying $\alpha$ and/or $\beta$ from the region exterior into the interior leads to loss of stability and emergence of oscillation. A simulation of a representative oscillation in this region is shown in Fig. \ref{fig:trajectories}(b). Finally, we further confirm the onset of oscillation \textcolor{black}{over} a localized region in $\alpha$ and $\beta$ by computing bifurcation diagrams using numerical continuation software. We fix one of the parameters $\alpha$, $\beta$ and vary the second to compute solution branches and bifurcation points, see Fig. \ref{fig:continuation}. Together these numerical results confirm the existence of parameter regimes in which the flow of ants into and out of the nest oscillates and suggests a mechanistic principle leveraged by turtle ants to enable effective trail optimization.


\begin{figure}[t]
    \centering
    \includegraphics[width=1.0 \linewidth]{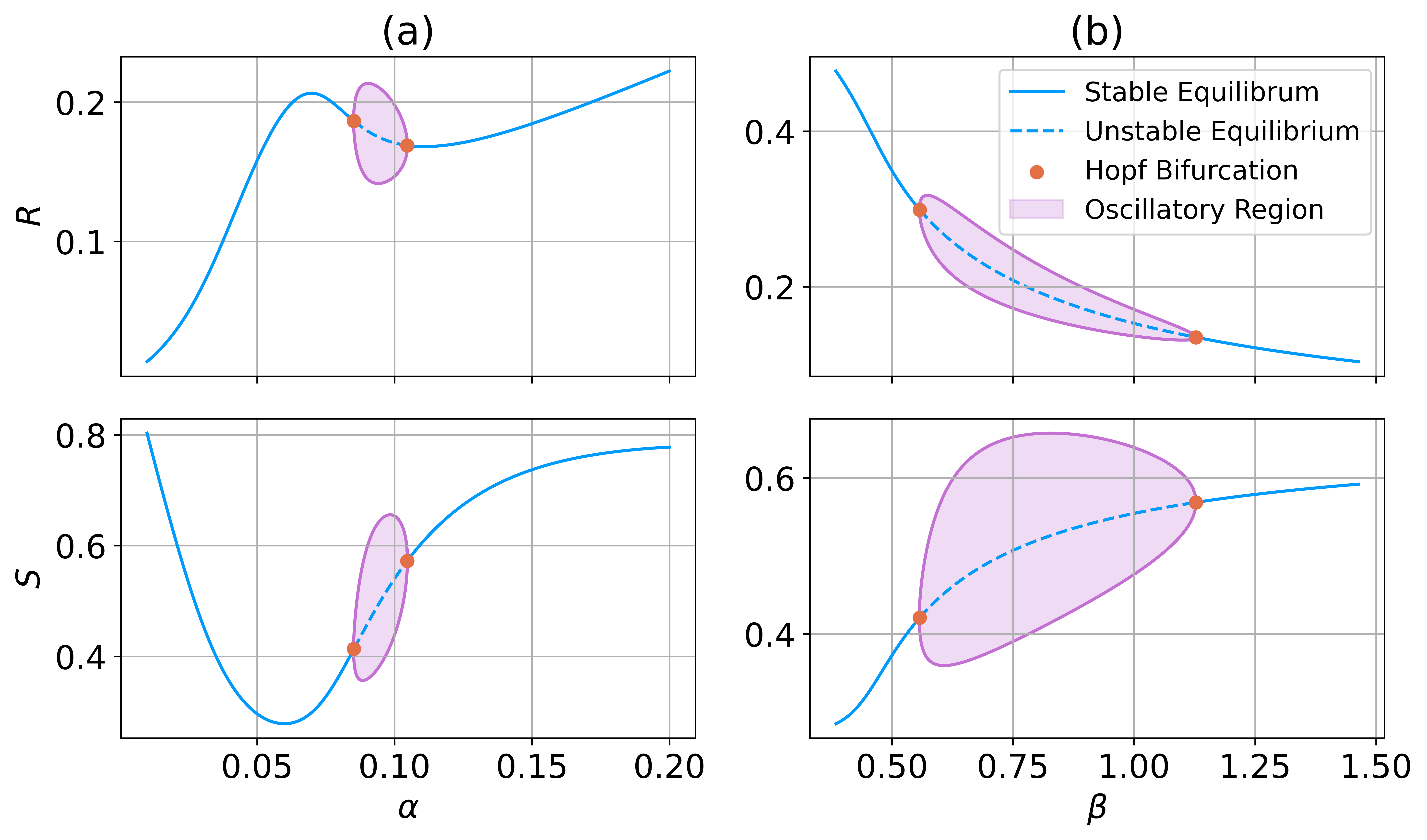}
    \caption{Numerical Continuation of \eqref{eq:model_2d},\eqref{eq:p},\eqref{eq:gamma}, computed with the Julia package BifurcationKit.jl \cite{veltz:hal-02902346}
    \label{fig:continuation} in parameters $\alpha$ (column (a)) and $\beta$ (column (b)), plotted with respect to $R$ and $S$. The vertical width of the oscillatory region at a fixed parameter value corresponds to the range of the limit cycle. Parameters not listed in the plot are \textcolor{black}{the} same as in Fig. \ref{fig:trajectories}.}
\end{figure}

    


\section{Discussion and Future Work\label{sec:discussion}}

In this work we proposed and analyzed a mechanistic model of turtle ant foraging dynamics to explore onset of oscillations in the flow into and out of a nest. We showed that \textcolor{black}{in our modeling approach,} trophallaxis interactions between ants at the nest alone are not enough to induce oscillations in their flow. However when these interactions are considered in combination with dynamic pheromone feedback, oscillations become possible. Our numerical studies reveal a localized region in the $(\alpha, \beta)$ parameter space in which the flow oscillates. Interestingly, we see that the oscillation persists for a wider range of $\beta$ while small variations in $\alpha$ can push the system out of the oscillatory regime. The transition rates $\alpha$ and $\beta$ are measurable in field experiments, which means that our model-based analytic insights can be tested against real-world observations of turtle ants.

In future work, we will extend our modeling by relaxing the simplifying assumption made in this paper that trail ants do not permanently return to the nest, and that nest ants do not go out to forage. We will consider the effects of transitions between nest and trail ants to derive a more general set of conditions for oscillation onset. 

Furthermore, we plan to expand our modeling to consider multiple nests to understand effects of expanding the trail network on the properties of the emergent oscillation.
\textcolor{black}{Additionally, we aim to examine other mechanisms that could give rise to oscillations, such as delays or stochasticity in the foraging process, as well as effects from the geometry of the trail network which was not modeled explicitly here.}

Our ultimate goal with this research effort is to generate a set of model-based testable hypotheses that can be compared against field data and can be used to design future experiments. \textcolor{black}{Turtle ant foragers returning to the nest interact with other ants outside the nest \cite{gordon2017local}
, further work will investigate the effects on the flow of interactions between returning foragers and ants receiving liquid food.}  This work will provide new insights into the role of oscillations in decentralized computation in nature.



\begin{thebibliography}{10}
\providecommand{\url}[1]{#1}
\csname url@samestyle\endcsname
\providecommand{\newblock}{\relax}
\providecommand{\bibinfo}[2]{#2}
\providecommand{\BIBentrySTDinterwordspacing}{\spaceskip=0pt\relax}
\providecommand{\BIBentryALTinterwordstretchfactor}{4}
\providecommand{\BIBentryALTinterwordspacing}{\spaceskip=\fontdimen2\font plus
\BIBentryALTinterwordstretchfactor\fontdimen3\font minus \fontdimen4\font\relax}
\providecommand{\BIBforeignlanguage}[2]{{%
\expandafter\ifx\csname l@#1\endcsname\relax
\typeout{** WARNING: IEEEtran.bst: No hyphenation pattern has been}%
\typeout{** loaded for the language `#1'. Using the pattern for}%
\typeout{** the default language instead.}%
\else
\language=\csname l@#1\endcsname
\fi
#2}}
\providecommand{\BIBdecl}{\relax}
\BIBdecl

\bibitem{boussard2021adaptive}
A.~Boussard, A.~Fessel, C.~Oettmeier, L.~Briard, H.-G. D{\"o}bereiner, and A.~Dussutour, ``Adaptive behaviour and learning in slime moulds: the role of oscillations,'' \emph{Philosophical Transactions of the Royal Society B}, vol. 376, no. 1820, p. 20190757, 2021.

\bibitem{dussutour2024flow}
A.~Dussutour and C.~Arson, ``Flow-network adaptation and behavior in slime molds,'' \emph{Fungal Ecology}, vol.~68, p. 101325, 2024.

\bibitem{gyllingberg2024minimal}
L.~Gyllingberg, Y.~Tian, and D.~J. Sumpter, ``A minimal model of cognition based on oscillatory and reinforcement processes,'' \emph{arXiv preprint arXiv:2402.02520}, 2024.

\bibitem{schmieder2019bidirectional}
S.~S. Schmieder, C.~E. Stanley, A.~Rzepiela, D.~van Swaay, J.~Saboti{\v{c}}, S.~F. N{\o}rrelykke, A.~J. deMello, M.~Aebi, and M.~K{\"u}nzler, ``Bidirectional propagation of signals and nutrients in fungal networks via specialized hyphae,'' \emph{Current Biology}, vol.~29, no.~2, pp. 217--228, 2019.

\bibitem{bacsar2000brain}
E.~Ba{\c{s}}ar, C.~Ba{\c{s}}ar-Ero{\u{g}}lu, S.~Karaka{\c{s}}, and M.~Sch{\"u}rmann, ``Brain oscillations in perception and memory,'' \emph{International journal of psychophysiology}, vol.~35, no. 2-3, pp. 95--124, 2000.

\bibitem{sepulchre2019control}
R.~Sepulchre, G.~Drion, and A.~Franci, ``Control across scales by positive and negative feedback,'' \emph{Annual Review of Control, Robotics, and Autonomous Systems}, vol.~2, no.~1, pp. 89--113, 2019.

\bibitem{sepulchre2018excitable}
------, ``Excitable behaviors,'' \emph{Emerging Applications of Control and Systems Theory: A Festschrift in Honor of Mathukumalli Vidyasagar}, pp. 269--280, 2018.

\bibitem{sepulchre2022spiking}
R.~Sepulchre, ``Spiking control systems,'' \emph{Proceedings of the IEEE}, vol. 110, no.~5, pp. 577--589, 2022.

\bibitem{cathcart2024spiking}
C.~Cathcart, I.~X. Belaustegui, A.~Franci, and N.~E. Leonard, ``Spiking nonlinear opinion dynamics {(S-NOD)} for agile decision-making,'' \emph{IEEE Control Systems Letters}, 2024.

\bibitem{JuarezAlvarez2025}
O.~Juarez-Alvarez and A.~Franci, ``Collective rhythm design in coupled mixed-feedback systems through dominance and bifurcations,'' \emph{IEEE Transactions on Control of Network Systems}, pp. 1--12, 2025.

\bibitem{gordon2012dynamics}
D.~M. Gordon, ``The dynamics of foraging trails in the tropical arboreal ant {\textit{\uppercase{c}ephalotes goniodontus}},'' \emph{PLoS One}, vol.~7, no.~11, p. e50472, 2012.

\bibitem{gordon2017local}
------, ``Local regulation of trail networks of the arboreal turtle ant, {\textit{\uppercase{c}ephalotes goniodontus}},'' \emph{The American Naturalist}, vol. 190, no.~6, pp. E156--E169, 2017.

\bibitem{chandrasekhar2021better}
A.~Chandrasekhar, J.~A. Marshall, C.~Austin, S.~Navlakha, and D.~M. Gordon, ``Better tired than lost: Turtle ant trail networks favor coherence over short edges,'' \emph{PLoS computational biology}, vol.~17, no.~10, p. e1009523, 2021.

\bibitem{garg2023distributed}
S.~Garg, K.~Shiragur, D.~M. Gordon, and M.~Charikar, ``Distributed algorithms from arboreal ants for the shortest path problem,'' \emph{Proceedings of the National Academy of Sciences}, vol. 120, no.~6, p. e2207959120, 2023.

\bibitem{sumpter2003modelling}
D.~Sumpter and S.~Pratt, ``A modelling framework for understanding social insect foraging,'' \emph{Behavioral Ecology and Sociobiology}, vol.~53, pp. 131--144, 2003.

\bibitem{brauer2008compartmental}
F.~Brauer, ``Compartmental models in epidemiology,'' \emph{Mathematical epidemiology}, pp. 19--79, 2008.

\bibitem{gracy2025modeling}
S.~Gracy, P.~E. Par{\'e}, J.~Liu, H.~Sandberg, C.~L. Beck, K.~H. Johansson, and T.~Ba{\c{s}}ar, ``Modeling and analysis of a coupled {SIS} bi-virus model,'' \emph{Automatica}, vol. 171, p. 111937, 2025.

\bibitem{doi:10.1137/0513013}
\BIBentryALTinterwordspacing
M.~W. Hirsch, ``Systems of differential equations which are competitive or cooperative: I. limit sets,'' \emph{SIAM Journal on Mathematical Analysis}, vol.~13, no.~2, pp. 167--179, 1982. [Online]. Available: \url{https://doi.org/10.1137/0513013}
\BIBentrySTDinterwordspacing

\bibitem{gordon_division_2016}
\BIBentryALTinterwordspacing
D.~M. Gordon, ``From division of labor to the collective behavior of social insects,'' \emph{Behavioral Ecology and Sociobiology}, vol.~70, no.~7, pp. 1101--1108, Jul. 2016. [Online]. Available: \url{https://doi.org/10.1007/s00265-015-2045-3}
\BIBentrySTDinterwordspacing

\bibitem{blanchini2008set}
F.~Blanchini, S.~Miani \emph{et~al.}, \emph{Set-theoretic methods in control}.\hskip 1em plus 0.5em minus 0.4em\relax Springer, 2008, vol.~78.

\bibitem{guckenheimer2013nonlinear}
J.~Guckenheimer and P.~Holmes, \emph{Nonlinear oscillations, dynamical systems, and bifurcations of vector fields}.\hskip 1em plus 0.5em minus 0.4em\relax Springer Science \& Business Media, 2013, vol.~42.

\bibitem{veltz:hal-02902346}
\BIBentryALTinterwordspacing
R.~Veltz, ``{BifurcationKit.jl},'' Jul. 2020. [Online]. Available: \url{https://hal.archives-ouvertes.fr/hal-02902346}
\BIBentrySTDinterwordspacing

\end{thebibliography}
\end{document}